\documentclass[11pt]{article}
	
	\newcommand{\blind}{1}
	
    \makeatletter
    \renewcommand\section{\@startsection {section}{1}{\z@}%
                                       {-3.5ex \@plus -1ex \@minus -.2ex}%
                                       {2.3ex \@plus.2ex}%
                                       {\normalfont\fontfamily{phv}\fontsize{16}{19}\bfseries}}
    \renewcommand\subsection{\@startsection{subsection}{2}{\z@}%
                                         {-3.25ex\@plus -1ex \@minus -.2ex}%
                                         {1.5ex \@plus .2ex}%
                                         {\normalfont\fontfamily{phv}\fontsize{14}{17}\bfseries}}
    \renewcommand\subsubsection{\@startsection{subsubsection}{3}{\z@}%
                                        {-3.25ex\@plus -1ex \@minus -.2ex}%
                                         {1.5ex \@plus .2ex}%
                                         {\normalfont\normalsize\fontfamily{phv}\fontsize{14}{17}\selectfont}}
    \makeatother

	\usepackage{amsmath}
	\usepackage{graphicx}
	\usepackage{enumerate}
	\usepackage{xcolor}
	\usepackage{natbib} 
	\usepackage{url} 
    
    \usepackage{amsthm,amsfonts,amssymb}
    \usepackage{amsthm}
    \newtheorem{proposition}{Proposition}

    \usepackage{algorithm}
    \usepackage{algpseudocode}
    \algtext*{EndIf}
    \algtext*{EndFor}
    
    \usepackage{graphicx}
    \usepackage{array}
    \usepackage{tabularx}
    \usepackage{multirow}
    \usepackage{booktabs}
    \usepackage{makecell}
    \usepackage{subfig}
    \usepackage{adjustbox}
    
    \newcolumntype{Y}{>{\centering\arraybackslash}X}
    
    \usepackage[english]{babel}
    \usepackage[shortlabels]{enumitem}
    \usepackage[normalem]{ulem}
    \usepackage{textcomp}
    \usepackage{gensymb}
    \usepackage{verbatim}
    \usepackage{soul}
    
    \usepackage[table]{xcolor}
    \usepackage{pifont}
    
    \definecolor{YesColor}{RGB}{220,245,225}
    \definecolor{NoColor}{RGB}{245,225,225}
    
    \newcommand{\yes}{\cellcolor{YesColor}\ding{51}}
    \newcommand{\no}{\cellcolor{NoColor}}
    
    \usepackage{siunitx}
    \DeclareSIUnit{\USD}{\$}
    \DeclareSIUnit{\MWh}{MWh}

    \usepackage{titlesec}

    \titleformat{\paragraph}[block]
      {\normalfont\normalsize\bfseries}
      {\theparagraph}{1em}{}
    \titlespacing*{\paragraph}{0pt}{1ex}{0.5ex}
    
\begin{document}
		
		\def\spacingset#1{\renewcommand{\baselinestretch}%
			{#1}\small\normalsize} \spacingset{1}
		
		\if0\blind
		{
			\title{\bf From Market Signals to Maintenance Decisions: Grid-Informed Maintenance Optimization in Offshore Wind Farms}
			\author{Keyi Wang$^1$, Robert Mieth$^1$ and Ahmed Aziz Ezzat$^1$ \\
			$^1$Department of Industrial \& Systems Engineering, \\Rutgers, The State University of New Jersey, USA.}
			\date{}
			\maketitle
		} \fi
        
		\if1\blind
		{

            \title{\bf From Market Signals to Maintenance Decisions: Grid-Informed Maintenance Optimization in Offshore Wind Farms}
			\author{Author information is purposely removed for double-blind review}
			
\bigskip
			\bigskip
			\bigskip
			\begin{center}
				{\LARGE\bf From Market Signals to Maintenance Decisions: Grid-Informed Maintenance Optimization in Offshore Wind Farms\\}
            
            {\vspace{0.5cm} Keyi Wang$^1$, Robert Mieth$^1$ and Ahmed Aziz Ezzat$^{1,\dagger}$ \\
			$^1$Department of Industrial \& Systems Engineering, \\Rutgers, The State University of New Jersey, USA.\\
            $^\dagger$Corresponding author: aziz.ezzat@rutgers.edu}
			\end{center}
			\medskip
		} \fi
		\bigskip
		
	\begin{abstract}
The reliable and cost-effective integration of large-scale offshore wind farms into power systems is largely influenced by maintenance operations. To date, power producers primarily rely on \textit{farm-level} information (e.g., health signals, power generation, weather conditions) for maintenance planning of wind turbine fleets. We propose to consider additional \textit{grid-level} information 
in maintenance operations, such as generation curtailment and market pricing. 
Periods of high generation curtailment or negative electricity pricing 
can be 
viewed as opportunities for preventive maintenance, since assets are fully or partially shut down due to grid-imposed constraints. 
We first analytically establish that access to market information has non-negative economic value for maintenance scheduling under a stylized setting, providing a theoretical foundation for treating market and grid information as a maintenance opportunity. We then formulate a grid-informed opportunistic maintenance scheduling model as a receding-horizon mixed integer linear program that jointly leverages both farm- and grid-level signals under realistic conditions to determine a maintenance plan for offshore wind turbines.
Considering a 600-MW offshore wind farm in the United States, our proposed method reduces annual maintenance costs by up to 10\%, while enhancing asset availability and resource utilization, relative to advanced maintenance models that solely rely on farm-level information.
	\end{abstract}
			
	\noindent%
	{\it Keywords:} 
    Electricity Systems, 
    Operations \& Maintenance,
    Reliability, Wind Energy.

	\spacingset{1.5} 

\section{Introduction} \label{S:intro}

Global offshore wind energy capacity has grown at an annual average rate of $10\%$ over the past decade, reaching $83.2$ Gigawatts (\unit{GW}), and accounting for $7.3\%$ of total global wind capacity as of the end of 2024 \citep{GWEC}. 
Owing to its numerous economic and environmental benefits, offshore wind is viewed as a key ingredient of future power systems, generating large-scale, reliable, and carbon-free electricity. Yet, ensuring the economic viability of large-scale offshore wind projects requires advanced maintenance models that can ensure asset availability and reliability, especially as power grids must increasingly coordinate the integration of diverse energy resources and growing demand under operational, market, and network constraints.

Operations and maintenance (O\&M) planning is central to the economic viability of offshore wind farms 
\citep{EWEA}. 
Maintenance scheduling in offshore wind farms entails planning around a highly uncertain marine environment 
and restricted site accessibility \citep{HOST}. 
To ensure cost-effective maintenance under such challenging conditions, offshore O\&M scheduling increasingly relies on ``\textit{opportunistic maintenance},''
wherein 
multiple maintenance tasks 
may be ``grouped'' at times of opportunity (e.g., persistent periods of provisional site access, or low wind resources), such that their joint execution would reduce overall O\&M costs 
\citep{Besnard1}. 

The vast majority of existing opportunistic maintenance optimization methods rely on what we refer to as \textit{farm-informed maintenance triggers}. 
These approaches leverage one or more sources of information that are ``endogenous'' to the wind farm to guide the scheduling of maintenance tasks, including asset health 
\citep{Asset}, weather conditions \citep{Accessibility}, the availability of maintenance resources \citep{HOST}, and wind-dependent production potential \citep{Besnard2}.   
This paper addresses a blind spot in existing opportunistic maintenance optimization by tapping into an additional class of maintenance triggers that are responsive to the power grid that connects the wind farm, and hence, are ``exogenous'' to the wind farm.

\subsection{Grid-informed maintenance triggers}
Grid-level information remains underused for maintenance scheduling in offshore wind farms, as pointed out in recent literature 
\citep{McMorland,mieth2026strategic}. If modeled properly, grid-informed maintenance scheduling can unlock new maintenance opportunities that are driven by grid operations.
In this work, we refer to these farm-exogenous sources of information as \textit{grid-informed maintenance triggers}, and particularly study two of these triggers: generation curtailment and electricity pricing signals. 


Generation curtailment is becoming more prevalent as wind energy penetration increases. It 
can be broadly classified into system curtailment and market-based curtailment. 
System curtailment results from grid congestion that temporarily or systematically limits the injection of power generated by the wind farm. 
Market-based curtailment, in contrast, is an economic dispatch decision made by the operator, typically due to low demand or insufficient operational flexibility.
Recent data shows that curtailment events are increasing in frequency and magnitude across electricity system operators worldwide. 
In the U.S., the wholesale market in the Electric Reliability Council of Texas (ERCOT) grid curtailed $5\%$ and $9\%$ of its total available wind and solar generation in 2022. By 2035, wind and solar curtailments are projected to increase to $13\%$ and $19\%$, respectively \citep{UScurtail}.

From the perspective of an (offshore) wind power producer, curtailment and negative electricity prices typically reduce realized revenues. 
However, we postulate that the same event can be an opportunity for operational optimization. 
During curtailment events, turbines are intentionally under-dispatched or fully shut down by the system operator to reduce a corresponding proportion of power generation. 
These periods of forced downtime can be strategically leveraged by wind farm operators to perform preventive maintenance without sacrificing energy yield. 
Moreover, such preventive opportunistic maintenance actions offer grid benefits by improving the reliability of generation assets and, in turn, enhancing the accuracy of grid-side wind power production forecasts.

An important remaining question is whether the temporal patterns of curtailment or negative price periods align with the requirements of maintenance tasks in offshore wind farms. 
Our empirical analysis of the U.S. East Coast power system, explained in greater detail in  Section \ref{S:numerical_exp},
suggests that there are sufficiently long windows of curtailment or low electricity prices that make them suitable maintenance triggers. For example, we see that high curtailment levels (up to 30\% of total generation capacity) typically last from 4 to 12 hours in length, which can be ideal for minor to medium repairs. 
These types of repairs comprise 
the vast majority of maintenance tasks in offshore wind farms \citep{Dinwoodie, Carroll}. Longer curtailment intervals (12+ hours) occur about 4\% of the time, which is coincident with the less frequent, major repairs. Understandably, the frequency, length, and magnitude of such curtailment events will likely continue to increase as power systems integrate more renewable generation capacity \citep{UScurtail, NESO}. 

\subsection{Literature Review}
Opportunistic maintenance for offshore wind farms has been extensively studied in recent years. As summarized in Table~\ref{tab:literature}, weather conditions are widely recognized as a key maintenance trigger, as they affect both power generation \citep{Zhang, Liyang, POSYDON} and site accessibility \citep{Besnard2, dalgic2015advanced, HOST, petersen2026accessibility, Sahin2026AccessibilityAware}. 
Another frequently studied maintenance trigger is asset health. Here, degradation or reliability indicators are used to trigger condition-based maintenance (CBM) \citep{wang2013prognosis,Shafiee, dui2017cost, Zhou, yousefi2019optimization, rozas2024condition}. 
Several studies further integrate weather and asset health information, recognizing that turbine degradation and failure risk evolve dynamically under varying environmental conditions \citep{Accessibility,Erguido1, Li1, POSYDON}. 
Maintenance and logistical resource information can also be viewed as opportunity triggers, including the availability 
of vessels, equipment, and spare parts \citep{Besnard1, Besnard2, Erguido1, HOST, li2024operation}.
\begin{table}[t]
\caption{Representative list of opportunistic O\&M models for wind turbine fleets.}  
\renewcommand{\arraystretch}{1}
\setlength{\tabcolsep}{3pt}
\centering
\footnotesize

\begin{tabularx}{\textwidth}{|l|Y|Y|Y||Y|Y|}
\hline
\multirow{3}{*}{Research Effort}
& \multicolumn{3}{c||}{
    \makecell{Farm-informed\\[-2pt]Maintenance Triggers}}
& \multicolumn{2}{c|}{
    \makecell{Grid-informed\\[-2pt]Maintenance Triggers}} \\
\cline{2-6}
& \makecell{Weather\\[-2pt]conditions}
& \makecell{Asset\\[-2pt]health}
& \makecell{Maintenance\\[-2pt]resources}
& \makecell{Electricity\\[-2pt]prices}
& \makecell{Generation\\[-2pt]curtailment} \\
\hline

\citet{Besnard1}
& \yes & \no & \yes & \no & \no \\

\citet{Besnard2}
& \yes & \no & \yes & \no & \no \\

\citet{Shafiee}
& \yes & \yes & \no & \no & \no \\

\citet{Yildirim}
& \yes & \yes & \yes & \yes & \no \\

\citet{Erguido1}
& \yes & \yes & \yes & \no & \no \\

\citet{Zhou}
& \yes & \yes & \no & \no & \no \\

\citet{Zhang}
& \yes & \no & \yes & \no & \no \\

\citet{Liyang}
& \yes & \yes & \yes & \no & \no \\

\citet{Li1}
& \yes & \yes & \no & \no & \no \\

\citet{HOST}
& \yes & \no & \yes & \yes & \no \\

\citet{Fallahi}
& \yes & \yes & \yes & \yes & \no \\

\citet{STOCHOS}
& \yes & \yes & \yes & \yes & \no \\

\citet{POSYDON}
& \yes & \yes & \yes & \yes & \no \\

\citet{Orla}
& \yes & \yes & \yes & \no & \yes \\

\hline
\textbf{GIMS (Proposed Method)}
& \yes & \yes & \yes & \yes & \yes \\
\hline

\end{tabularx}
\label{tab:literature}
\end{table}

Methods that treat grid information (such as electricity prices and curtailment) as maintenance triggers are extremely sparse.
\citet{Yildirim} jointly optimize turbine maintenance scheduling and dispatch using production and electricity pricing information. 
\citet{zhang2021two} consider the joint scheduling of asset maintenance and power production, integrating pre-determined power supply commitments from the day-ahead electricity market into maintenance planning. \citet{HOST} integrate electricity price information in their maintenance models. Their approach does not explicitly treat pricing information as a maintenance trigger, but rather as an additional source of input uncertainty for the simulation of wind farm revenue. Furthermore, none of the above references 
consider generation curtailment as a maintenance trigger. 

Perhaps the earliest mention of considering market signals in opportunistic maintenance settings is by \citet{McMorland}. 
Their study proposes a conceptual framework for leveraging market signals such as curtailment and pricing, but does not develop a detailed model for optimizing the use of such information. The closest effort to the one presented in this paper is the recent work by \citet{Orla} who explicitly leverage curtailment information in an O\&M scheduling framework. Their method entails 
an \textit{if-then-else} cost-benefit analysis where the cost to perform a repair action is compared with curtailment compensation offered by the grid operator. 
Maintenance is then assigned when this compensation exceeds the repair cost. 
While this approach 
has a number of limitations. 
First, it adopts a greedy decision strategy whereby maintenance is triggered immediately by a curtailment event, potentially missing favorable opportunities in the future. Second, maintenance opportunities can be formed by multiple interacting factors, including site accessibility and production levels. 
A direct comparison of 
repair cost with compensation payments overlooks these interactions. 
Hence a maintenance action may satisfy the cost-benefit analysis, but may result in unfavorable economics if it is scheduled during periods of high production 
or conditions with a high risk of inaccessibility and maintenance delays. 

A well-established approach to formulate opportunistic maintenance scheduling models is 
through a mixed integer linear program (MILP) \citep{Besnard1, Besnard2, Shafiee, Yildirim, Erguido1, Zhou, Zhang, HOST, STOCHOS, POSYDON}, wherein maintenance decisions are represented by binary variables, complemented by auxiliary variables for power generation and allocation of maintenance resources. 
Our paper, too, follows this modeling approach.
Alternative modeling approaches include simulation-based methods \citep{byon2011simulation, Orla}, dynamic programming \citep{Accessibility,soltani2024structured}, and more recently, decision-focused machine learning methods \citep{kazemian2024attention}. 

\subsection{Contributions}
We postulate that 
grid information 
are a largely untapped opportunity for improving O\&M scheduling in (offshore) wind farms. 
To this end, 
we propose a mathematical model for maintenance optimization that formally leverages these grid-informed maintenance triggers. 
We summarize our main contributions as follows:
\begin{itemize}
    \item We  analytically establish that access to market information, such as curtailment and pricing signals, can have non-negative economic value for maintenance
scheduling under stylized conditions, providing a theoretical foundation for treating market and system information as a maintenance opportunity.
    \item We develop GIMS (short for the \underline{G}rid \underline{I}nformed \underline{M}aintenance \underline{S}cheduling), a multi-horizon scheduling model for opportunistic maintenance in offshore wind farms. 
    GIMS is formulated 
    as a mixed integer linear program that jointly considers endogenous, farm-level information (asset health, weather, and power generation) and exogenous grid-level information (generation curtailment and electricity pricing) and their 
    complex impacts on economic and operational performance of offshore wind farms. 

    \item  We evaluate GIMS using a realistic test case from the Northeastern U.S., a geographical region with significant offshore wind potential and development plans. Our experiments show that considering curtailment and pricing signals as grid-level opportunity triggers reduces maintenance costs by up to $10\%$ for a $600$-\si{\mega\watt} wind farm, while substantially enhancing asset availability and resource utilization, relative to methods that do not fully leverage such exogenous information.  
    
\end{itemize}
 
The remainder of this paper is organized as follows. 
The theoretical analysis, mathematical formulation, and solution algorithm of GIMS are presented in Section~\ref{S:Opt_model}.
The data and experimental setup are introduced in Section~\ref{S:numerical_exp}. In Section~\ref{S:res_ana}, we present and discuss our results.
Section~\ref{S:concl}
concludes the paper and outlines future research directions.

\section{Grid-Informed Maintenance for Offshore Wind Farms} \label{S:Opt_model}
We take the perspective of a profit-maximizing offshore wind farm operator that seeks revenue from selling power to the grid and incurs cost from turbine maintenance. 
Figure~\ref{fig:GIMS flowchart} illustrates the conceptual workflow of our method. 
At the farm level, asset health directly informs maintenance scheduling decisions through projected turbine remaining useful life (RUL). Maintenance scheduling constitutes the central decision variable in the framework, linking power generation and maintenance resource dispatch. 
Once maintenance actions are scheduled, maintenance crews and vessels are dispatched accordingly, resulting in O\&M costs and power output reduction due to turbine downtime. Weather conditions simultaneously impact power generation and turbine accessibility.
Maintenance resources, such as crew, equipment, and vessel availability, impose constraints on crew dispatch.
\begin{figure}[!htbp]
    \centering
    \includegraphics[width=.5\linewidth]{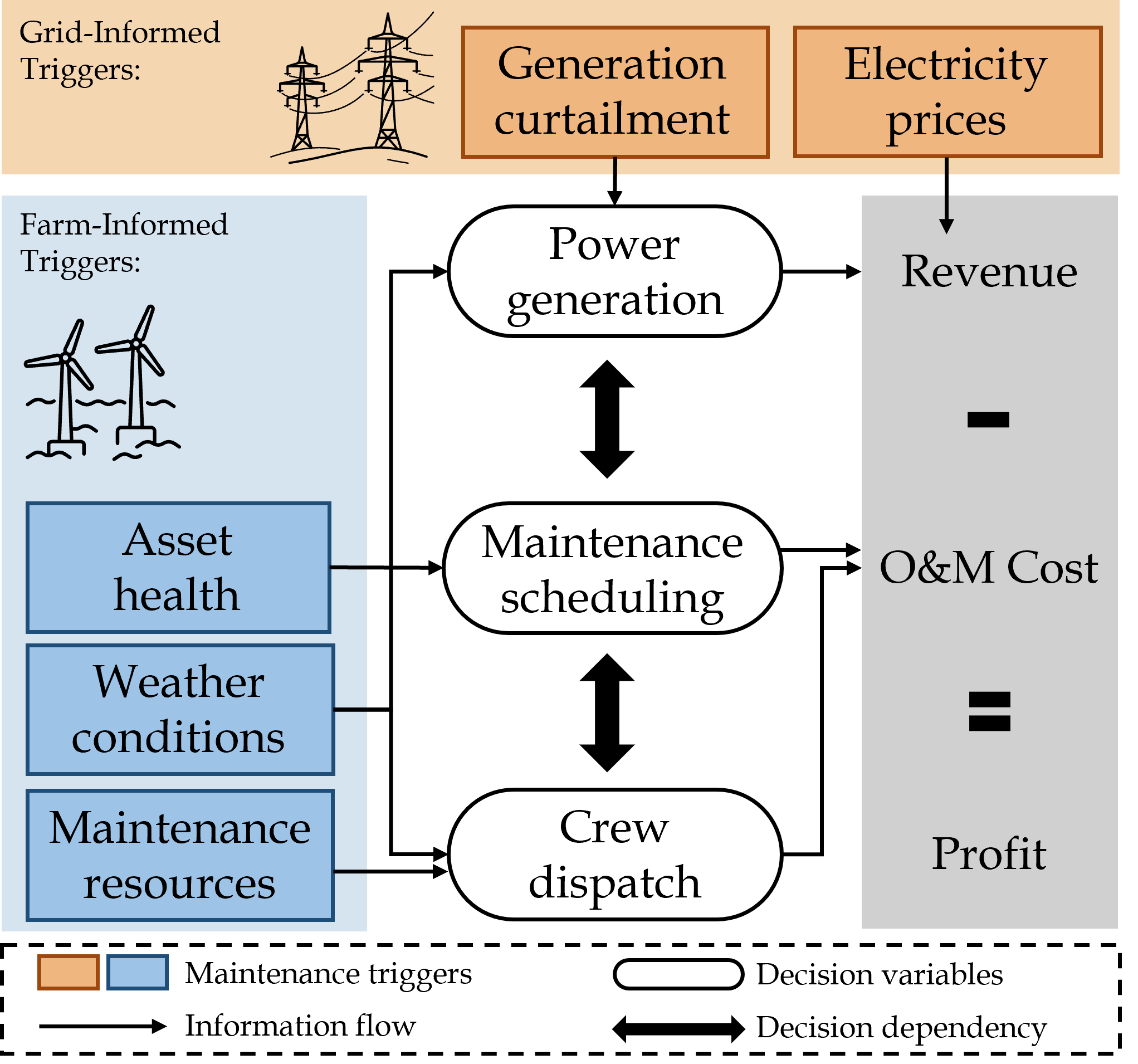}
    \caption{Illustration of 
    the GIMS framework. Grid-informed triggers 
    affect electricity generation and revenues. Farm-informed triggers 
    jointly govern generation, maintenance scheduling, and crew dispatch. The objective is to maximize the wind power producer's profit by balancing revenue and O\&M costs.}
    \label{fig:GIMS flowchart}
\end{figure}

In parallel, grid-informed triggers enter the decision process as exogenous operational and economic factors. Generation curtailment directly limits allowable power output, potentially forcing turbines to operate below rated capacity or remain offline. Electricity prices determine profit from electricity generation and define the trade-off between power production and allocating turbine downtime for maintenance. 
By jointly modeling these interactions, GIMS captures the coupling between grid conditions and O\&M decisions.

In the following subsections, we first 
provide a theoretical foundation for treating grid and market signals as maintenance opportunities. We then develop an optimization framework for realizing this value under realistic environmental and operational constraints.

\subsection{
Economic Value of Grid Information 
under a Stylized Setting}
\label{S:analytical_property}
A wind turbine requires a maintenance task to be scheduled on day $d\in\mathcal D=\{1,\ldots,N_{\mathcal D}\}$. Let $C_d\in[0,1]$ denote the curtailment ratio on day $d$, which defines the fraction of potential wind power generation that is curtailed by the grid operator, and let $K_d$ 
denote the corresponding electricity price on day $d$. For the theoretical analysis in this section, we impose assumptions $\mathcal{A}1$-$\mathcal{A}3$ to construct a stylized setting that isolates the economic value of incorporating grid and market information into maintenance planning. We subsequently relax these assumptions in Section \ref{s:formulation} where we develop the full optimization model under realistic and more general maintenance and operational conditions.
\begin{enumerate}[label=({$\mathcal{A}$}{{\arabic*}})]
\item Daily curtailment ratios and electricity prices $\{(C_d,K_d)\}_{d\in\mathcal D}$ are assumed to be independent and identically distributed realizations of a random bivariate vector $(C,K)$, with $C \in [0,1]$ and $K \geq 0$. 
\item A single wind turbine is considered for maintenance. 
\item Factors affecting maintenance scheduling other than curtailment and electricity prices (e.g., weather and maintenance resources) are not explicitly modeled. 
Instead, their effects are summarized through two fixed quantities: the nominal daily power generation of the turbine in the absence of maintenance and curtailment, $P^{\text{nom}} > 0$, and the generation loss associated with the maintenance action, $0 < \Delta G \le P^{\text{nom}}$. 
\end{enumerate}

In the absence of maintenance, 
the curtailment reduced daily generation \(P_d\) and the corresponding daily revenue \(V_d\) are given by Equation (\ref{eq:revenue_baseline}).
\begin{equation}
P_d=(1-C_d)P^{\mathrm{nom}} \quad , 
\quad
V_d
=
K_d(1-C_d)P^{\mathrm{nom}}.
\label{eq:revenue_baseline}
\end{equation}

We consider two maintenance scheduling strategies. The \textit{grid-blind strategy} selects maintenance day \(d^{\mathrm B}\) independently of the grid information ($C_d, K_d$). Under $\mathcal{A}1-\mathcal{A}3$, all days are \textit{ex ante} statistically equivalent to this strategy; hence, selecting any day yields the same expected maintenance-induced revenue loss. In contrast, the \textit{grid-informed strategy} observes ($C_d, K_d$) and selects the day with the lowest daily revenue, 
i.e., $d^{\mathrm I}=\arg\min_{d\in\mathcal D}V_d$.

When maintenance is performed, 
the maximum generation 
is reduced 
to $P^{\mathrm{nom}}-\Delta G$. 
Under the stylized model, realized generation under maintenance is therefore the minimum of the post-curtailment generation and the maintenance-constrained generation, that is, $\min\left\{(1-C_d)P^{\mathrm{nom}},P^{\mathrm{nom}}-\Delta G\right\}$, yielding the maintenance-induced revenue loss on day \(d\):
\begin{align}
    \Delta V_d
    &=
    K_d
    \left[
    P_d
    -
    \min\left\{
    (1-C_d)P^{\mathrm{nom}},
    P^{\mathrm{nom}}-\Delta G
    \right\}
    \right]
    \nonumber\\
    &=
    K_d
    \left(
    \Delta G-C_dP^{\mathrm{nom}}
    \right)^+,
    \label{eq:analytical_loss}
\end{align}
where $(x)^+=\max\{x,0\}$. 
Conceptually, Equation (\ref{eq:analytical_loss}) captures the interaction between maintenance- and curtailment-induced losses. 
When $C_dP^{\mathrm{nom}}<\Delta G$, existing curtailment absorbs only part of the generation loss associated with maintenance, resulting in an incremental loss of $\Delta G-C_dP^{\mathrm{nom}} > 0$. Conversely, when $C_dP^{\mathrm{nom}}\geq \Delta G$, existing curtailment fully absorbs the maintenance-induced generation loss, and $\Delta V_d = 0$.

Let $\Delta V^{\mathrm B}=\Delta V_{d^{\mathrm B}}$ and $\Delta V^{\mathrm I}=\Delta V_{d^{\mathrm I}}$
denote the maintenance-induced revenue losses under the grid-blind and grid-informed strategies, respectively. Proposition \ref{prop:aware_vs_blind} establishes a sufficient condition under which the grid-informed strategy weakly reduces the expected maintenance-induced revenue loss. 
\begin{proposition}[Economic value of grid information in maintenance scheduling]
\label{prop:aware_vs_blind}
Let \(f_{V\mid C}(v\mid c)\) denote the conditional density of the wind power producer's revenue given curtailment $C = c$. Suppose that, for any \(c_2>c_1\), the likelihood ratio
$\frac{f_{V\mid C}(v\mid c_2)}
     {f_{V\mid C}(v\mid c_1)}
\text{ is nonincreasing in } v$,
over the support of \(f_{V\mid C}(v\mid c_1)\). Then, the expected maintenance-induced revenue loss under the grid-informed strategy is no greater than that under the grid-blind strategy:
\begin{equation}
\boxed{
\mathbb E[\Delta V^{\mathrm I}]
\le
\mathbb E[\Delta V^{\mathrm B}].
}
\label{eq:aware_vs_blind}
\end{equation}
\end{proposition}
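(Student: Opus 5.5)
The plan is to rewrite the maintenance-induced loss as a function of the selection statistic $V_d$ and of $C_d$, and then combine two monotonicity facts. First, the loss is monotone in $(V,C)$. Second, the minimum-revenue day has stochastically small $V$ and, because of the likelihood-ratio hypothesis, stochastically large $C$.

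\emph{Step 1: rewrite the loss.} Assume $C_d<1$. Equation (\ref{eq:revenue_baseline}) gives $K_d=V_d/((1-C_d)P^{\mathrm{nom}})$. Substituting into Equation (\ref{eq:analytical_loss}) gives $\Delta V_d=V_d\,h(C_d)$, where
\begin{equation*}
h(c)=\frac{(\Delta G-cP^{\mathrm{nom}})^+}{(1-c)P^{\mathrm{nom}}}\ge 0 .
\end{equation*}
On the region $cP^{\mathrm{nom}}<\Delta G$, the derivative of $h$ has the sign of $P^{\mathrm{nom}}(\Delta G-P^{\mathrm{nom}})\le 0$, using $\Delta G\le P^{\mathrm{nom}}$ from $\mathcal{A}3$. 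Elsewhere $h\equiv 0$. Hence $h$ is nonincreasing on $[0,1)$. If $C_d=1$, then $\Delta V_d=K_d(\Delta G-P^{\mathrm{nom}})^+=0$, so we may set $h(1)=0$ and the representation $\Delta V_d=V_d h(C_d)$ holds on all of $[0,1]$. Consequently $\Delta V$ is nondecreasing in $V$ and nonincreasing in $C$.

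\emph{Step 2: the blind strategy.} Under $\mathcal{A}1$, $d^{\mathrm B}$ is chosen independently of the i.i.d.\ pairs $(C_d,K_d)$. Therefore $\mathbb E[\Delta V^{\mathrm B}]=\mathbb E[Vh(C)]$ for a generic pair $(C,K)$ with $V=K(1-C)P^{\mathrm{nom}}$.

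\emph{Step 3: the informed strategy.} Because the days are i.i.d., conditional on the event $\{d^{\mathrm I}=d,\,V_d=v\}$ the curtailment $C_d$ has the law of $C$ given $V=v$. The reason is that the selection event depends on the other days only through their revenues, which are independent of $(C_d,V_d)$. Ties in the $\arg\min$ are broken independently of the data. This yields
\begin{equation*}
\mathbb E[\Delta V^{\mathrm I}]=\mathbb E[\varphi(V_{(1)})],\qquad V_{(1)}=\min_{d\in\mathcal D} V_d,
\end{equation*}
and similarly $\mathbb E[\Delta V^{\mathrm B}]=\mathbb E[\varphi(V)]$, where $\varphi(v)=v\,\mathbb E[h(C)\mid V=v]$.

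Since $V_{(1)}\le V_1$ pointwise and $V_1\sim V$, we have $V_{(1)}\le_{\mathrm{st}}V$. The claim (\ref{eq:aware_vs_blind}) therefore follows once $\varphi$ is shown to be nondecreasing. Because $v\ge 0$ (as $K\ge0$) and $h\ge0$, it suffices that $v\mapsto\mathbb E[h(C)\mid V=v]$ is nonnegative and nondecreasing. Given that $h$ is nonincreasing, this holds if $C\mid V=v$ is stochastically decreasing in $v$.

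\emph{Step 4: transfer the hypothesis (main obstacle).} The remaining task is to turn the likelihood-ratio assumption on $V\mid C$ into the required ordering of $C\mid V$. By Bayes' rule, $f_{C\mid V}(c\mid v)\propto f_{V\mid C}(v\mid c)f_C(c)$. For $c_2>c_1$ and $v_2>v_1$, the hypothesis is equivalent to
\begin{equation*}
f(v_2\mid c_2)f(v_1\mid c_1)\le f(v_1\mid c_2)f(v_2\mid c_1),
\end{equation*}
that is, the joint density of $(V,C)$ is reverse-$\mathrm{TP}_2$. This condition is symmetric in the two variables, so $f_{C\mid V}(c\mid v_2)/f_{C\mid V}(c\mid v_1)$ is nonincreasing in $c$. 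Thus $C\mid V=v_2\le_{\mathrm{lr}}C\mid V=v_1$, and likelihood-ratio order implies the usual stochastic order. I expect the delicate part to be handling supports and zero densities in this step: the hypothesis is stated only on the support of $f_{V\mid C}(\cdot\mid c_1)$, and $C$ may have atoms, for example at $C=0$ or $C=1$. I would manage this by working with the cross-product inequality restricted to the joint support, and by treating atoms through a counting dominating measure in place of Lebesgue density.
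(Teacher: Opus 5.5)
Your proposal is correct and follows the paper's proof essentially step for step; your $h$ and $\varphi$ are the paper's $q$ and $h$. Both factor $\Delta V=V\,h(C)$ with $h$ nonincreasing, use independence across days to get $\mathbb E[\Delta V^{\mathrm I}]=\mathbb E[\varphi(V_{(1)})]$ and $\mathbb E[\Delta V^{\mathrm B}]=\mathbb E[\varphi(V)]$, and transfer the likelihood-ratio hypothesis from $V\mid C$ to $C\mid V$ by Bayes' rule so that $\varphi$ is nondecreasing. Your extra care about tie-breaking, the endpoint $C=1$, and zero densities outside the support of $f_{V\mid C}(\cdot\mid c_1)$ goes slightly beyond the paper, which states the cross-product form of the hypothesis as an equivalence without such caveats.
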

A proof is provided in Appendix~\ref{app:proof_aware_blind}. Proposition~\ref{prop:aware_vs_blind} establishes a sufficient condition under which incorporating grid information into maintenance scheduling does not increase the expected maintenance-induced revenue loss. 
The condition imposed in Proposition~\ref{prop:aware_vs_blind} is a monotone likelihood ratio characterizing the probabilistic relationship between curtailment and revenue: For \(c_2>c_1\), a non-increasing likelihood ratio implies that higher-curtailment states are relatively more likely to be associated with lower revenue. This relationship aligns with the operational reality of electric systems where curtailment arises from transmission congestion or local oversupply and therefore coincides with lower electricity prices.

Section \ref{s:formulation} builds on this theoretical insight by relaxing assumptions $\mathcal{A}1$–$\mathcal{A}3$ and incorporating realistic operational constraints and maintenance conditions, such as those shown in Figure \ref{fig:GIMS flowchart}, into the proposed maintenance model.

\subsection{Optimization Model} \label{s:formulation}
We consider a grid-connected offshore wind farm comprising a set of turbine assets indexed by $i\in\mathcal{I}$. Maintenance decisions are made over two planning horizons: a short-term horizon (STH) with day-ahead outlook and hourly resolution $t\in\mathcal{T}=\{1,...,24\}$, and a long-term planning horizon (LTH), defined in daily periods $d\in\mathcal{D}=\{1,...,N_{\mathcal{D}}\}$. The LTH starts immediately after the short-term horizon and extends for up to $N_{\mathcal{D}}$ days. The planning problem is solved under a daily rolling-horizon procedure provided in Section \ref{S:algorithm}. GIMS makes two primary sets of decisions: power generation and asset maintenance:
\begin{itemize}[leftmargin=*]
    \item\textit{Power generation:} $p_{t,i},\  p_{d,i}\in \mathbb{R}_{\ge 0}$ denote the average power generated by the $i$th turbine during hour $t$ and day $d$, respectively. 
    \item\textit{Maintenance decisions:} $m_{t,i},\ m_{d,i} \in \{0,1\}$ indicate if maintenance on the $i$th turbine is scheduled to start at hour~$t$ ($m_{t,i}=1$) and on day $d$ ($m_{d,i}=1$), respectively. 
\end{itemize}

\subsubsection{Objective Function}
The objective function, expressed in \eqref{eq:obj_fun}, is the total of the day-ahead profit $z$ and the future profits realized over the LTH, such that $z_d$ denotes the profit of day $d$.
\begin{equation}\label{eq:obj_fun}
    \max_{m_{t,i}, m_{d,i}} z + \sum_{d\in \mathcal{D}}z_d.
\end{equation}

The STH profit $z$ is expressed in (\ref{STH_prof}) and is calculated as operating revenue less maintenance-related costs. Operating revenue is obtained from the product of the day-ahead electricity  market price $\kappa_{t}$ and generated power $p_{t,i}$. The maintenance-related costs include five components: \textit{(i)} crew costs incurred at an hourly crew rate $C^c$, where $x_{t,i}\in\{0,1\}$ indicates if a crew is dispatched to the $i$th turbine at hour $t$, \textit{(ii)} repair cost (explained below), \textit{(iii)} maintenance extension cost (explained below), \textit{(iv)} overtime expenses incurred at an hourly overtime rate of $C^o$, where $o$ denotes the total overtime hours accrued during the short-term horizon, and \textit{(v)} vessel rental costs charged, on a daily basis, at a rate of $C^v$, where $v\in\{0,1\}$ denotes if a vessel is chartered for the STH ($v=1$). 
\begin{multline} \label{STH_prof}
z=
\sum_{i \in \mathcal{I}}
\Bigg[
\sum_{t \in \mathcal{T}}
\bigg(
\overbrace{\kappa_{t} \cdot p_{t,i}}^{\text{operating revenue}}
-
\overbrace{C^c \cdot x_{t,i}}^{\text{crew cost}}
-
\overbrace{(1-\rho_i)\cdot[s_{i}\cdot C^{\text{PM}}+(1-s_{i})\cdot C^{\text{CM}}]\cdot m_{t,i}}^{\text{repair cost}}
\bigg)
\\
-
\overbrace{\Big(
C^{v}\cdot \pi_{i}
+\kappa_{1} \cdot \tilde{p}_{1,i}
\Big)}^{\text{maintenance extension cost}}
\Bigg]
-
\overbrace{C^o \cdot o}^{\text{overtime cost}}
-
\overbrace{C^v \cdot v}^{\text{vessel cost}}.
\end{multline}

The repair cost in \eqref{STH_prof} depends on whether the maintenance action is preventive or corrective. If turbine $i$ remains operational at the onset of maintenance, the action is considered a \textit{preventive maintenance} (PM) and incurs a cost $C^{\text{PM}}$.   Otherwise, if the turbine has already failed, the action is a \textit{corrective maintenance} (CM) and incurs a cost
$C^{\text{CM}}$. The turbine status variable $s_i \in \{0,1\}$, as defined in 
\eqref{eq:state}, indicates whether the turbine's remaining useful life (RUL) $\lambda_i$ has been fully expended, signaling a failed turbine. 
\begin{equation}
\label{eq:state}
s_i = \mathbb{I}\{\lambda_i > 1\},
\end{equation}
If a maintenance task was initiated on the previous day and remains ongoing, as indicated by $\rho_i = 1$, its repair cost is not incurred again; consequently, the repair-cost term in \eqref{STH_prof} reduces to zero. The parameter $\rho_i$ is not a decision variable, but a binary indicator computed \textit{post ante} from the previous day's maintenance decisions.

The maintenance extension cost in \eqref{STH_prof} captures the expenses associated with maintenance tasks that cannot be completed within the STH. These interruptions are represented by the auxiliary variable $\pi_i \in \{0,1\}$ which indicates whether maintenance of turbine $i$ extends beyond the STH. As shown in \eqref{STH_prof}, 
the maintenance extension cost comprises two components: the additional vessel rental cost $C^{v}\cdot \pi_{i}$ needed to complete the repair task, and the expected revenue loss during the extension period $\kappa_{1} \cdot \tilde{p}_{{d=1},i}$, such that $\kappa_1$ is the electricity price of the first day in the LTH, and $\tilde{p}_{{d=1},i}$ is the corresponding lost production during that day. $\tilde{p}_{1,i}$ is calculated as $\tilde{p}_{1,i} = a_{d=1,i} f_{d=1,i}\mbox{R}(2T\pi_i + r_i)$, where $a_{d=1,i} f_{d=1,i}\mbox{R}$ is the production on the first day of the LTH (see \ref{S: maintenance-power generation}). $T$ and $r_i$ denote the one-way travel time and the remaining maintenance duration, respectively, both in hours. Note that $T$ is a constant determined by logistical route information, while $r_i$ is an auxiliary variable that will be computed using maintenance decisions in \eqref{eq:incomplete2s}-\eqref{eq:incompletes}.


The profit over the LTH, $z_d$ is expressed in (\ref{LTH_prof}). Its structure is similar to the STH profit in \eqref{STH_prof} with daily-resolution parameters and variables, 
but the maintenance extension cost is replaced by the deferred maintenance term. This term discourages pre-mature maintenance decisions by assigning greater value to decisions that extend productive asset life. 
The parameter $C^\lambda$ controls the magnitude of this term and represents the marginal economic value assigned to preserving one additional day of usable turbine life. 
\begin{multline} \label{LTH_prof}
z_d = 
\sum_{i \in \mathcal{I}} \bigg[ \overbrace{\kappa_{d}\cdot p_{d,i}}^{\text{operating revenue}}
-
\overbrace{C^c\cdot \theta_{d,i} \cdot m_{d,i}}^{\text{crew cost}}
-
\overbrace{(1-\rho_i)\cdot[s_{d,i}\cdot C^{\text{PM}}+(1-s_{d,i})\cdot C^{\text{CM}}]\cdot m_{d,i}}^{\text{repair cost}}
\\
- 
{\overbrace{C^{\lambda}\cdot (\lambda_i - d)\cdot m_{d,i}}^{\text{deferred maintenance}}}\Big]
-
\overbrace{C^o \cdot o_{d}}^{\text{overtime cost}}
-
\overbrace{C^v \cdot v_{d}}^{\text{vessel cost}}
\quad {\forall ~ d \in \mathcal{D}}.
\end{multline}

\subsubsection{Farm-Grid Interaction Constraints}
\label{S:farm-grid interaction}
The central feature of GIMS is the interaction between turbine-level power availability and grid-imposed curtailment. Turbine maintenance affects the amount of power that can be generated by the wind farm, while curtailment limits the amount of power that can be delivered to the grid. We therefore first characterize turbine-level generation and then impose the corresponding farm-level grid constraints.

\paragraph{Asset Maintenance-Power Generation Interaction} 
\label{S: maintenance-power generation}
The potential power generation of each turbine depends on its realized wind conditions and availability. We define the weather-dependent capacity factor of each turbine $i$ in hour $t$ as $f_{t,i} \in [0,1]$. Let \(a_{t,i}\in\{0,1\}\) indicate whether turbine \(i\) is available during hour \(t\), where \(a_{t,i}=1\) denotes availability. This availability variable will be linked directly to the maintenance decisions through \eqref{eq:availabilitys}-\eqref{eq:availability2s}. Then, 
the hourly power generation $p_{t,i}$ is expressed as in (\ref{eq:yaw_power_STH}), where $\mbox{R}$ is the rated capacity of a single turbine (in MW). 
When the turbine is unavailable ($a_{t,i}=0$) due to failure or maintenance, it cannot generate power. 
\begin{equation}\label{eq:yaw_power_STH}
    p_{t,i}\leq a_{t,i} f_{t,i} \mbox{R} \quad \forall t \in \mathcal{T}, i \in \mathcal{I}.
\end{equation}

Similarly, we denote $f_{d,i} \in [0,1]$ as the average capacity factor 
of turbine $i$ on day $d$, and let \(a_{d,i}\in\{0,1\}\) denote its availability during that day. Accordingly, the daily power production $p_{d,i}$ is bounded by:
\begin{equation}\label{eq:yaw_power_LTH}
    p_{d,i}\leq 24 a_{d,i} f_{d,i}\mbox{R}   \quad \forall d \in \mathcal{D}, i \in \mathcal{I}.
\end{equation}

For preventive maintenance scheduled during the LTH, the turbine 
becomes unavailable for an expected maintenance duration \(\theta_{d,i}\). Accordingly, the corresponding reduction in available operating time affects the average power generation as follows:  
\begin{equation} \label{eq:mainttsL}
p_{d,i}\leq  f_{d,i} (24 -\theta_{d,i}  m_{d,i}) \mbox{R} \quad \forall d \in \mathcal{D}, i \in \mathcal{I}.
\end{equation}

\paragraph{Grid Curtailment-Power Generation Interactions} 
We model the curtailment requirements issued by the grid operator as follows.
The nominal (pre-curtailed and pre-maintenance) power generation of the offshore wind farm at time $t$ is the aggregate of individual wind turbine outputs given as:
\begin{equation} 
\label{eq:nom_power_STH}
P^{\mathrm{nom}}_t = \mbox{R}  \sum_{i \in \mathcal{I}} f_{t,i}
\quad \forall t \in \mathcal{T}.
\end{equation}

If the wind farm is curtailed, the admissible power generation of the wind farm is restricted by a curtailment level $C_t \in [0,1]$, denoting the fraction of nominal power generation curtailed at time $t$. Accordingly, the ``post-curtailed'' generation is defined as:
\begin{equation}\label{eq:cur_power_STH}
P_t
= (1 - C_t) P^{\mathrm{nom}}_t
\quad \forall t \in \mathcal{T}.
\end{equation}
Analogously, for the LTH, the nominal daily power generation and the corresponding grid-imposed generation limit are given by:
\begin{equation}\label{eq:nom_power_LTH}
P^{\mathrm{nom}}_d = 24 \mbox{R}  \sum_{i \in \mathcal{I}} f_{d,i}
\quad\forall d \in \mathcal{D},
\end{equation}
\begin{equation}\label{eq:cur_power_LTH}
P_d
= (1 - C_d)  P^{\mathrm{nom}}_d
\quad \forall d \in \mathcal{D},
\end{equation}
where \(C_d\in[0,1]\) denotes the daily curtailment ratio. 

As such, the aggregate power actually generated by the wind farm is constrained by the grid-imposed generation limit:
\begin{align}
\sum_{i \in \mathcal{I}} p_{t,i}
&\leq P_t,
\quad \forall t \in \mathcal{T}, \label{eq:power_STH} \\
\sum_{i \in \mathcal{I}} p_{d,i}
&\leq P_d
\quad \forall d \in \mathcal{D}.\label{eq:power_LTH}
\end{align}

Equations \eqref{eq:yaw_power_STH}-\eqref{eq:yaw_power_LTH} and \eqref{eq:power_STH}-\eqref{eq:power_LTH} therefore establish two distinct upper bounds on wind-farm generation. 
The farm-informed bound is directly influenced by turbine availability and maintenance decisions, whereas the grid-informed bound is imposed exogenously by the grid operator through the curtailment level. Consequently, the effective generation limit is governed by the more restrictive of the two. 

This interaction is central to GIMS. When the grid-imposed limit is below the generation that would otherwise be available from the farm, a reduction in turbine availability due to maintenance may be absorbed, in whole or in part, by existing curtailment. In such cases, scheduling maintenance during periods of high curtailment can reduce the incremental production and revenue loss associated with taking a turbine offline. This mechanism is the operational counterpart of the theoretical result established in Section \ref{S:analytical_property}. 


\subsubsection{Farm-Informed Maintenance Constraints}
The remaining constraints characterize the turbine health and maintenance resources that determine the farm-informed generation bound established in Section \ref{S:farm-grid interaction}. These constraints 
are common in the condition-based wind farm maintenance literature (see, e.g., \cite{STOCHOS, POSYDON}). They collectively determine 
when and whether maintenance can be performed, and how maintenance activities affect turbine availability.
\paragraph{Turbine health constraints} 
GIMS prioritizes turbines with a high failure risk for maintenance.
We introduce the maintenance requirement variable $\chi_i \in\{0,1\}$ which indicates whether turbine $i$ should be considered for maintenance if its RUL, $\lambda_i$, falls below the RUL threshold $N^\lambda$. 
\begin{equation}\label{requirment}
\chi_i = \mathbb{I}\{\lambda_i \le N^\lambda\}.
\end{equation}

The equality constraint in \eqref{eq:maints} links the maintenance requirement variable $\chi_i$ to the maintenance scheduling decisions $m_{t,i}$, $m_{d,i}$. 
If $\chi_i = 1$, a maintenance task must be scheduled for wind turbine $i$, in either  the STH or the LTH. 
\begin{equation} \label{eq:maints}
\sum_{t \in \mathcal{T}} m_{t,i} + \sum_{d \in \mathcal{D}} m_{d,i} = \chi_i \quad \forall i \in \mathcal{I}.
\end{equation}

\paragraph{Maintenance and Logistical Resource Constraints} \label{subsec:log} 
A turbine for which a maintenance task is scheduled to start at time $t$ in the STH remains shut down (as indicated by $u_{\tilde{t},i}=1$)  
for a duration that is equal to the minimum of the remaining time within the STH, which is $24 - t$, and the expected maintenance duration~$\theta_{t,i}$:  
\begin{equation} \label{eq:downs}
\sum_{\tilde{t} = t}^{\min\{24,\, t + \theta_{t,i}\}} u_{\tilde{t},i}
\ge
\min\{24 - t,\, \theta_{t,i}\} \cdot m_{t,i}
\quad \forall t \in \mathcal{T},\, i \in \mathcal{I}.
\end{equation}

If this maintenance is not finished within the STH, the remaining maintenance time $r_i$ is defined in \eqref{eq:incomplete2s}. 
The binary variable $\pi_i$ in \eqref{eq:incompletes} is set to $1$ whenever $r_i > 0$. We enforce this through the big-$\mbox{M}$ constraint in \eqref{eq:incompletes}, where $\mbox{M}_1$ is a large positive number. The complementary counterpart is not required because $\pi_i$ carries a positive penalty in the objective function; thus, $\pi_i=0$ is optimal whenever $r_i=0$. 
\begin{align} 
r_i &\ge \sum_{t \in \mathcal{T}} m_{t,i} \cdot [\theta_{t,i} - 24 + t]^+
\quad \forall i \in \mathcal{I} \label{eq:incomplete2s} \\
r_i &\le \mbox{M}_1 \cdot \pi_i
\quad \forall i \in \mathcal{I}.
 \label{eq:incompletes}
\end{align}

In both the STH and LTH, vessels are rented on a daily basis if at least one maintenance task is scheduled on that day, as enforced by constraints \eqref{eq:vessel1s} and \eqref{eq:vessel2s}, respectively, where $\mbox{M}_2$ is another large positive number. 
\begin{align} 
    \mbox{M}_2\cdot v &\geq \sum_{t \in \mathcal{T}} \sum_{i \in \mathcal{I}} m_{t,i}, \label{eq:vessel1s} \\
  \mbox{M}_2 \cdot v_{d} &\geq \sum_{i \in \mathcal{I}}m_{d,i} \quad \forall d \in \mathcal{D}.\label{eq:vessel2s}
\end{align}

The total number of maintenance crews available at any time is bounded by $N^c$: 
\begin{equation} \label{eq:crew2s}
\sum_{i \in \mathcal{I}} x_{t,i} \leq N^c \quad \forall t \in \mathcal{T}.
\end{equation}

The total working hours of the maintenance crews in the STH are limited by $N^w$ hours per crew per day. 
When this limit is exceeded, overtime hours are incurred and tracked by variable $o$.
Overtime hours are compensated at a higher rate $C^o$. The total number of daily overtime hours is capped by $N^o$ 
as enforced by constraints \eqref{eq:workhourss} and \eqref{eq:overtime1s}:
\begin{align} 
\sum_{t \in \mathcal{T}} \sum_{i \in \mathcal{I}} x_{t,i} &\leq N^c N^w + o, \label{eq:workhourss} \\
o &\leq N^c N^o.  \label{eq:overtime1s}
\end{align}

A similar set of crew working-hour and overtime constraints is proposed for the LTH. Unfinished maintenance tasks from the STH are given priority at the onset of the LTH ($d=1$). 
The term $\theta_{1,i} m_{1,i} + (2 T \pi_{i}+r_{i})$ in \eqref{eq:workhours2s} represents the total required work hours, including both newly scheduled maintenance and the remaining work from unfinished tasks.
\begin{equation} \label{eq:workhours2s}
\sum_{i \in \mathcal{I}} \left[\theta_{1,i} m_{1,i} + (2  T  \pi_{i}+r_{i})\right] \leq N^c N^w + o_{1},
\end{equation}
For the remainder of the LTH (up to $N_{\mathcal{D}}$), the total crew working hours are constrained as:
\begin{equation} \label{eq:workhours3s}
\sum_{i \in \mathcal{I}} \left[\theta_{d,i} m_{d,i}\right] \leq N^c  N^w + o_{d} \quad \forall d \in \{2, ..., N_{\mathcal{D}}\}.
\end{equation}
Finally, the total number of daily overtime hours is bounded by $N^o$:
\begin{equation} \label{eq:overtime2s}
o_{d} \leq N^c N^o \quad \forall d \in \mathcal{D}.
\end{equation}

\paragraph{Turbine availability constraints}
As established in Section \ref{S:farm-grid interaction}, the maintenance decisions affect power generation through turbine availability. 
For the STH, turbine availability is constrained by:
\begin{equation} \label{eq:availabilitys}
a_{t,i}
\le
s_i (1 - \rho_i)
+
\sum_{\tilde{t} \in \mathcal{T}}
\frac{24 - \tilde{t}}{24 - t}\, m_{\tilde{t},i}
\quad \forall t \in \mathcal{T}, i \in \mathcal{I},
\end{equation}
The first term reflects whether a turbine has failed ($s_i = 0$) or is already under maintenance from a previous day ($\rho_i = 1$). 
Once the RUL has been fully expended, the turbine remains unavailable and can only be restored to availability after a maintenance task is assigned, as enforced by the second term.
The analogous LTH constraint is: 
\begin{equation} \label{eq:availability2s}
a_{d,i}
\le
s_{d,i} (1 - \rho_i)
+
\sum_{\tilde{d} \in \mathcal{D}}
\frac{N_{\mathcal{D}} - \tilde{d}}{N_{\mathcal{D}} - d}\, m_{\tilde{d},i},
\quad \forall d \in \mathcal{D},\ i \in \mathcal{I}.
\end{equation}

Additionally, a turbine under maintenance ($u_{t,i} = 1$) remains unavailable and occupies a crew, as enforced by \eqref{eq:maintts} and \eqref{eq:crews}.
\begin{equation} \label{eq:maintts}
a_{t,i} \le 1 - u_{t,i}
\quad \forall t \in \mathcal{T},\, i \in \mathcal{I}.
\end{equation}
\begin{equation} \label{eq:crews}
x_{t,i} \ge u_{t,i}
\quad \forall t \in \mathcal{T},\, i \in \mathcal{I}.
\end{equation}

Finally, a maintenance task requires crew travel to and from the wind farm in addition to the maintenance work itself.
When maintenance is scheduled for turbine \(i\), either during the STH or as a continuation from a previous day, the total crew dispatch time must account for both travel legs and at least one hour of maintenance: 
\begin{equation} \label{eq:travel time}
\sum_{t \in \mathcal{T}} x_{t,i}
\ge
(2 T + 1)  \left( \sum_{t \in \mathcal{T}} m_{t,i} + \rho_i \right)
\quad \forall i \in \mathcal{I}.
\end{equation}

\subsection{Solution Algorithm} \label{S:algorithm}
GIMS is implemented in an iterative rolling-horizon algorithm to generate dynamic maintenance schedules over the entire planning horizon, as summarized in Algorithm~\ref{alg:GIMS}. The notation for the model variables and parameters appearing in Algorithm~\ref{alg:GIMS} is defined in Section \ref{s:formulation}, except for the wind speed and wave height, $\nu$ and $\eta$, respectively, their corresponding safety limits 
$\nu_{max}$ and $\eta_{max}$, and the turbine-specific repair time, $\tau_i$. 


At each iteration \(\ell\), the algorithm considers a short-term horizon (STH) in hourly resolution \(\mathcal{T}\) and a long-term horizon (LTH) \(\mathcal{D}\) spanning the subsequent \(N_D\) days. Forecasts for wind speed \(\nu_t, \nu_d\) and wave height \(\eta_t, \eta_d\) are provided for the STH and LTH. 
Wind speed forecasts are converted to turbine-specific capacity factors \(f_{t,i}\) and \(f_{d,i}\), which are then used to estimate generation 
following 
Eqs. \eqref{eq:yaw_power_STH}-\eqref{eq:mainttsL} and \eqref{eq:nom_power_STH}-\eqref{eq:nom_power_LTH} in Section \ref{s:formulation}. 
Market signal forecasts, namely electricity prices \(\kappa_t, \kappa_d\) and curtailment levels \(C_t, C_d\), are used to update the generation limits, $P_t$ and $P_d$ through Eqs. \eqref{eq:cur_power_STH}-\eqref{eq:cur_power_LTH}. These limits are shaped by both the farm- and grid-informed generation bounds, as discussed in Section \ref{S:farm-grid interaction}. 

Turbine-specific health information, \(\lambda_i\), 
determines the turbine's maintenance requirement \(\chi_i\) (
Recall Eq. \eqref{requirment}), as well as its operational status \(s_i\) and \(s_{d,i}\) (
Eq. \eqref{eq:state}). These quantities inform the maintenance and availability constraints in Eqs.  
\eqref{eq:maints} and \eqref{eq:availabilitys}–\eqref{eq:availability2s}, respectively. 
The expected repair time \(\tau_i\), together with the wind and wave forecasts $\nu$ and $\eta$, and their corresponding safety limits $\nu_{max}$ and $\eta_{max}$, determine the weather- and accessibility-adjusted maintenance durations \(\theta_{t,i}\) and \(\theta_{d,i}\). These are pre-computed using the approach of \cite{STOCHOS},  
and subsequently incorporated into the maintenance and logistical resource constraints in Section \ref{subsec:log}. 


The resulting MILP is solved using Gurobi by maximizing the combined STH and LTH profits, as expressed in \eqref{eq:obj_fun}, subject to 
a relative optimality gap tolerance of 0.01\% and a computational time limit of 1,800 seconds are imposed. The resulting solution provides the optimal hourly maintenance schedule for the STH and daily maintenance plan for the LTH, where $\{m^\ell_{t,i}\}_{t\in\mathcal{T},i\in\mathcal{I}}$ denotes the day-ahead maintenance schedule at iteration $\ell$. 

Importantly, only the STH maintenance decisions are executed at each iteration. 
After the STH decisions are implemented, the algorithm evaluates the resulting maintenance status of each turbine. If a scheduled maintenance task is completed within the STH, the corresponding turbine state is reset and updated with its new health information. If the task is not completed within the STH, the maintenance is extended into the subsequent rolling horizon, and the associated repair information is updated to account for the unfinished task. The algorithm then advances the rolling horizon by one day, incorporates the newly available forecasts and updated turbine states, and resolves the MILP. This procedure is repeated until the end of the planning horizon. The hourly STH schedules generated across all iterations are concatenated to obtain the executed maintenance schedule $\mathcal{S}^*$
, which is used to evaluate the performance of GIMS relative to the benchmark approaches.



\begin{algorithm}[!htbp]
\caption{An Iterative Solution Algorithm for GIMS}
\label{alg:GIMS}
\begin{algorithmic}[1]

\State \textbf{Input} set ${\mathcal{I}}$, $N_{\mathcal{D}}$, optimization horizon $L$, initial turbine parameters $\{\lambda_i, \tau_i\}_{i\in\mathcal{I}}$
\Statex\hspace {2.4em} cost parameters: $C^{\lambda}, C^{\text{PM}}, C^{\text{CM}}, C^c, C^o, C^v$
\Statex \hspace {2.4em} operational parameters: $R, T, N^\lambda, N^c, N^w, N^o, \nu_{max}, \eta_{max}$

\For{$\ell=1,\dots,L-N_D$}
    \State Set
        $\mathcal{T}=\{24(\ell-1)+1,\dots,24\ell\}$,
        $\mathcal{D}=\{\ell+1,\dots,\ell+N_D\}$

    \State Input wind and wave information 
    $\{\nu_{t}, \eta_{t}\}_{t\in\mathcal{T}}$ and $\{\nu_{d}, \eta_{d}\}_{d\in\mathcal{D}}$

    \State Input pricing and curtailment information: $\{\kappa_{t}, C_{t}\}_{t\in\mathcal{T}}$ and $\{\kappa_{d}, C_{d}\}_{d\in\mathcal{D}}$

    \State Update farm status:
        $\{\chi_i\}_{i\in\mathcal{I}}$,
        $\{f_{t,i},s_i,\theta_{t,i},P_t\}_{t\in\mathcal{T},i\in\mathcal{I}}$,
        $\{f_{d,i},s_{d,i},\theta_{d,i},P_d\}_{d\in\mathcal{D},i\in\mathcal{I}}$

    \State Solve the MILP. Store the optimal solution
        $S^\ell_{\mathrm{STH}}
        =\{m^\ell_{t,i}\}_{t\in\mathcal{T},i\in\mathcal{I}}$

    \For{$i\in\mathcal{I}$ such that
        $\sum_{t\in\mathcal{T}}m^\ell_{t,i}>0$}
        \If{$\pi_i=0$} \quad Set $\rho_i=0$. Input new RUL $\lambda_i$ \Comment{task finished, reset turbine}
        \Else \quad Set $\rho_i=1$. $\tau_i\leftarrow 2T+r_i$ \Comment{maintenance extension}
        \EndIf
    \EndFor
\EndFor

\State \Return maintenance schedules $S^*=\{S^1_{\text{STH}},\dots,S^{L-N_D}_{\text{STH}}\}$

\end{algorithmic}
\end{algorithm}


\section{Data and Experimental Setup} \label{S:numerical_exp}
This section describes the data and experimental setup for evaluating GIMS. 

\subsection{Data Description} \label{S:data}
Our datasets include weather conditions, power system data, turbine health information, and wind power generation models, all at hourly resolution, from a test case study in the Northeastern United States. This is a geographical region characterized by substantial offshore wind resources and potential development plans. The data is explained below. 

\subsubsection{Weather Data}
Real-world measurements of 
weather conditions are extracted at the location of a potential offshore wind farm location in the New York Bight. The data is publicly available \citep{wave_data, aziz_ezzat_2025_15642047}, and contains hub-height, farm-level wind speed and wave height measurements collected by an ocean Lidar buoy deployed by the New York State Energy Research \& Development Authority (NYSERDA) to inform wind resource assessment and planning studies in this region. 

\subsubsection{Power System Data}
The electricity price and curtailment information are obtained by simulating the PJM Interconnection grid using the real-world 7-zone PJM dataset from \citet{icarus}.
The dataset reflects the current PJM generation portfolio and inter-zonal transmission limits. 
We augmented the dataset by adding three offshore wind farms corresponding to potential future projects in the region \citep{NJDEP_OffshoreWindProjects_2025}.
The simulation replicates PJM operations in a two-stage process.
For each day, we solve a first-stage (day-ahead) unit commitment (UC) model, using wind speed forecasts to determine wind power generation. 
A second-stage real-time (RT) model then receives the actual wind power generation, which may differ from the day-ahead forecast, and computes an updated cost-minimal generator dispatch based on the resources that are available from the UC decision. 
See \citet[Eqs. (1) and (2)]{ghazanfariharandi2025value} for a detailed description of the UC and RT models.

We use prices from two PJM zones
motivated by their different price profiles for comparison.
The more continental Pricing Zone~$1$ has predominantly positive prices over the simulation period, with an average electricity price of \si{\$}\SI{96.76}{\per\MWh}. 
Pricing Zone $2$ reflects the actual effective selling price at the offshore wind farm which is exposed to negative prices (about $10$\% of the time) 
and has a relatively lower average price of \si{\$}\SI{41.94}{\per\MWh}.

\subsubsection{Turbine Health Data}
Turbine-specific RULs are simulated by an exponential model \citep{gebraeel2008prognostic}: 
\begin{equation}\label{eq:exponential_degradation_model}
A_i(t) = \alpha_i \, e^{\beta_i t},
\end{equation}
where $A_i(t)$ denotes the degradation signal of turbine $i$ at time $t$. 
$\alpha_i$ and $\beta_i$ are the turbine-specific scaling parameter and degradation rate parameter, respectively. 
In practice, wind turbines either fail by degradation or shock. 
Statistics from the offshore wind turbine reliability literature suggest that degradation-driven failures account for roughly $70\%$ of failures, whereas unpredictable failures account for about $25\%$ \citep{Faulstich}. Thus, we also generate failure shocks according to a Poisson process such that $N_i(d) \sim \text{Poisson}\left(\gamma^s_i\cdot \frac{d}{365}\right)$, where $N_i(d)$
denotes the cumulative number of failure shocks at day $d$ for turbine $i$, and $\gamma^s_i$ is the yearly failure rate (by shock). 

\subsubsection{Power curve model}
The normalized power levels, $f_{t,i}$ and $f_{d,i}$, are obtained by mapping hub-height wind speed to power generation using a data-driven power curve model that has been previously validated on offshore wind data \citep{wind_power_transform}. Other data-driven power curve models that incorporate 
multiple weather parameters, in addition to wind speed, can be equivalently used if needed \citep{Ding}. 


\subsection{Experimental Setup}
We conduct two sets of case studies. Case Study I considers a toy example of a five-turbine offshore wind farm using electricity prices from Pricing Zone $1$. This simplified test case provides insight into the performance of GIMS.
Case Study II is a realistic setting of a $50$-turbine offshore wind farm. We run Case Study II twice using electricity price from Pricing Zones $1$ and $2$.
Table \ref{tab:pars} shows the parameter values in all of our experiments, which are largely based on prior studies in offshore wind farm maintenance \citep{STOCHOS, POSYDON}. We assume identical repair times across all turbines, with $\tau_i = \tau$ for all $i$. 
For asset health, we calibrate the exponential model parameters of \eqref{eq:exponential_degradation_model} to obtain physically plausible degradation paths that yield identical annual failure rates, $\gamma^d = \gamma_i^d$ and $\gamma^s = \gamma_i^s$ for all $i$, with $\gamma^d$ and $\gamma^s$ tuned to match reported failure statistics in the wind industry \citep{Dinwoodie, Carroll,elsahhar2025analysis}.  

\begin{table}[t]
    \caption{Parameter settings for the two sets of case studies}
    \label{tab:pars}
    \centering
    \renewcommand{\arraystretch}{0.8}
    \setlength{\tabcolsep}{4pt}
    \begin{adjustbox}{max width=\linewidth}
    \begin{tabular}{|c|c||c|c||c|c||c|c|}
        \hline
        Parameter & Value & Parameter & Value
        & Parameter & Value & Parameter & Value \\
        \hline
        $R$ & $12$ MW
        & $C^o$ & \$125/hour
        & $N^c$ & $2$ crews
        & $\gamma_d$ & $3.395$ \\

        $C^{\text{PM}}$ & \$15,000
        & $C^v$ & \$8,000/day
        & $N^w$ & $8$ hours
        & $\gamma_s$ & $1.213$ \\

        $C^{\text{CM}}$ & \$30,000
        & $T$ & $3$ hours
        & $N^o$ & $4$ hours
        & $\alpha$ & $0.05$ \\

        $C^\lambda$ & \$30/day
        & $\tau$ & $12$ hours
        & $\nu_{\max}$ & $15$ m/s
        & $\beta$ & $0.10$ \\

        $C^c$ & \$250/hour
        & $N^\lambda$ & $30$ days
        & $\eta_{\max}$ & $1.8$ m
        & & \\
        \hline
    \end{tabular}
    \end{adjustbox}
\end{table}

We implement GIMS in Python and solve it using Gurobi (v9.1.2) 
on a computer with an AMD Ryzen $9$ $7845$HX CPU 
and 16 GB of RAM.
A $15$-day rolling horizon is employed 
that advances daily over a 284-day evaluation period.
The average solution times for Case Studies I and II are $0.6$ and $5.6$ minutes, respectively, to a relative optimality gap of $\le0.01\%$. 

The following benchmark methods are used to evaluate the performance of GIMS: 
(\textit{i}) \textit{Corrective maintenance strategy (CMS)}, where maintenance is initiated in response to a failure, and no PM actions are considered. CMS is a na{\"i}ve benchmark used routinely in the maintenance scheduling literature and is considered as a lower bound baseline; 
(\textit{ii}) \textit{Corrective and curtailment maintenance strategy (CCMS)}: CCMS extends CMS using a cost-benefit analysis similar to the one proposed by ~\citet{Orla}. Preventive maintenance is allowed only when the cost-benefit analysis shows that the curtailment compensation provided by the grid operator is higher than the projected repair cost; (\textit{iii}) \textit{Holistic Opportunistic Strategy (HOST)}: This is an adaptation of a recently proposed maintenance optimization model by \citet{HOST}. HOST leverages the endogenous maintenance triggers (namely, asset health, weather, and maintenance resources) but, unlike GIMS, does not leverage exogenous, grid-informed maintenance triggers. 

We evaluate all models in terms of economic, operational, and logistical performance. Economic evaluation metrics include maintenance cost (\si{\$}), revenue (\si{\$}), and cost per unit of energy (\si{\$\per\MWh}). 
Operational evaluation metrics include time- and energy-based availability. Specifically, time-based availability is measured as 
the fraction of time during which turbines are operational. Importantly, curtailment-induced shutdowns are excluded.
In contrast, {energy-based availability} is measured as the fraction of achievable energy that is generated,  which is defined as the ratio between actual power output and the post-curtailed power limit.
Finally, logistical metrics include vessel utilization which represents the ratio of dispatched to chartered vessels, the number of vessel rentals, and the number of maintenance tasks per vessel which serves as another proxy for maintenance resource utilization. 

\section{Results and Analyses} \label{S:res_ana}
\subsection{Case Study I: The 5-Turbine Test Case}

As shown in Table~\ref{tab:benchmark}, GIMS clearly improves all evaluation metrics relative to CMS and CCMS. 
Meanwhile, the comparison with HOST reveals the gains realized by GIMS from tapping into exogenous grid-level information. GIMS reduces the total maintenance cost 
by $16.3\%$ relative to HOST. A comparable improvement is observed in the maintenance cost per unit of energy, which declines from \si{\$}\SI{2.42}{\per\MWh} to \si{\$}\SI{2.04}{\per\MWh} ($15.7\%$ reduction). The quantitative results indicate that grid signal visibility yields additional economic and operational benefits beyond the sole reliance on farm-level information. 
\begin{table}[]
\centering
\caption{Case Study I results (5 wind turbines). Bold-faced values denote best performance.}
\label{tab:benchmark}

\renewcommand{\arraystretch}{0.8}
\setlength{\tabcolsep}{4pt}

\begin{adjustbox}{max width=\linewidth}
\begin{tabular}{l|cccc}
\hline
& \multicolumn{4}{c}{\textbf{O\&M Strategy}} \\
\textbf{Performance Metric}
& CMS & CCMS & HOST & GIMS \\
\hline
Maintenance Cost (\si{\$}\si{M})
& 0.51 & 0.47 & 0.43 & \textbf{0.36} \\

Cost/Energy (\si{\$}\si{\per\MWh})
& 2.91 & 2.69 & 2.42 & \textbf{2.04} \\

Revenue (\si{\$}\si{M})
& 15.36 & 15.37 & \textbf{15.46} & \textbf{15.46} \\

Time-based Availability
& 99.4\% & 99.4\% & 99.4\% & \textbf{99.5\%} \\

Energy-based Availability
& 98.9\% & 98.9\% & 99.3\% & \textbf{99.4\%} \\

Vessel Rentals
& 16.0 & 15.0 & 13.0 & \textbf{9.0} \\

Vessel Utilization
& 68.8\% & 67\% & \textbf{100\%} & \textbf{100\%} \\

Maintenance/Vessel Ratio
& 0.69 & 0.73 & 1.23 & \textbf{1.56} \\
\hline
\end{tabular}
\end{adjustbox}
\end{table}

To support the quantitative results, we examine differences in how GIMS responds to market signals compared to HOST for a representative 5-day period in April 2020, as illustrated in Figure \ref{fig:5WT_maintenance vs revenue}. 
The black dashed curve in Figure~\ref{fig:5WT_maintenance vs revenue} represents the (hypothetical) hourly pre-curtailed revenue. 
The green shaded area denotes the (actual) post-curtailed revenue, corresponding to the effective power generation after accounting for curtailment. 
The gap between the dashed and shaded curves therefore captures the revenue reduction caused by curtailment. Maintenance actions for two wind turbines ($1$ and $2$) are then shown (corresponding to 40\% reduction in nominal power output of the wind farm). GIMS 
opportunistically schedules these two preventive maintenance tasks on April $15$, where the difference between the pre- and post-curtailed revenues is highest. This leverages the expected $54$\% curtailment of that day so that the 
maintenance outages (40\%) can be fully absorbed by curtailment and hence do not drive additional revenue losses. In contrast, 
HOST misses this opportunity
and instead defers these same two maintenance actions to April 17 (two days later), 
where the curtailment ratio drops to $23\%$, which is insufficient to offset the same $40\%$ reduction caused by two maintenance tasks. As a result, the uncovered $17.0\%$ reduction directly translates into additional revenue losses. 
\begin{figure*}
    \centering
    \includegraphics[angle=0, width=1\linewidth]{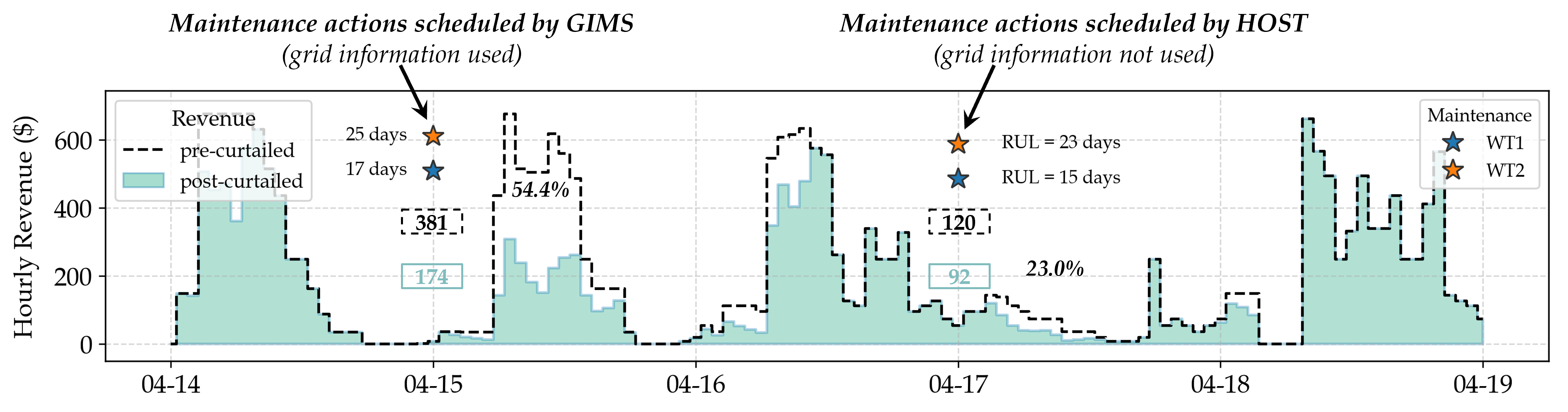}
    \caption{
    GIMS and HOST maintenance schedules in April 2020. Stars mark maintenance starts and are labeled with turbine RUL. At the maintenance date, the boxed values report the average hourly revenue, under pre-curtailed (dashed black) and post-curtailed (green) conditions. The percentages indicate the average curtailment ratio.}
    \label{fig:5WT_maintenance vs revenue}
\end{figure*}

\subsection{Case Study II: The 50-Turbine Test Case}
Table~\ref{tab:50WT_results_combined} reports the evaluation results of all methods under both pricing zones.

\subsubsection{Pricing Zone 1}
Similar to Case Study I, GIMS clearly outperforms CMS and CCMS across all metrics. Compared to HOST, GIMS achieves a substantial reduction in maintenance-related expenditures. Maintenance costs decrease from \si{\$}\SI{5.19}{M} to \si{\$}\SI{4.66}{M} ($10.2\%$ reduction). A comparable saving is observed in the cost per unit of energy, which declines from \si{\$}\SI{2.95}{\per\MWh} to \si{\$}\SI{2.65}{\per\MWh}. 
GIMS also requires the fewest vessel rentals, while attaining the highest vessel utilization rate among all strategies. The maintenance-per-vessel ratio is $2.42$, i.e., $19.8\%$ better than the best alternative, which highlights the effectiveness of GIMS in exploiting maintenance grouping effects to improve logistical efficiency.


\begin{table*}[t]
\centering
\caption{Performance comparison for Case Study II (50 wind turbines) under Pricing Zones 1 (left) and 2 (right). Bold-faced values denote best performance.}
\label{tab:50WT_results_combined}

\renewcommand{\arraystretch}{0.8}
\setlength{\tabcolsep}{4pt}

\begin{adjustbox}{max width=\textwidth}
\begin{tabular}{l|cccc||cccc}
\hline
\multirow[c]{2}{*}
& \multicolumn{4}{c||}{\textbf{Pricing Zone 1}}
& \multicolumn{4}{c}{\textbf{Pricing Zone 2}} \\
\textbf{Performance Metric}
& CMS & CCMS & HOST & GIMS
& CMS & CCMS & HOST & GIMS \\
\hline

Maintenance Cost (\si{\$}\si{M})
& 6.38 & 5.88 & 5.19 & \textbf{4.66}
& 6.28 & 5.70 & 4.83 & \textbf{4.44} \\

Cost/Energy (\si{\$}\si{\per\MWh})
& 3.65 & 3.36 & 2.95 & \textbf{2.65}
& 4.24 & 3.84 & 3.24 & \textbf{2.98} \\

Revenue (\si{\$}\si{M})
& 153.4 & 153.5 & \textbf{153.9} & \textbf{153.9}
& 57.6 & 57.7 & \textbf{58.0} & \textbf{58.0} \\

Time-based Availability
& 99.2\% & 99.2\% & 99.3\% & \textbf{99.4\%}
& 99.2\% & 99.2\% & \textbf{99.4\%} & \textbf{99.4\%} \\

Energy-based Availability
& 98.7\% & 98.7\% & \textbf{99.1\%} & \textbf{99.1\%}
& 83.5\% & 83.7\% & \textbf{84.1\%} & \textbf{84.1\%} \\

Vessel Rentals
& 150 & 146 & 95 & \textbf{72}
& 149 & 146 & 84 & \textbf{70} \\

Vessel Utilization
& 68.7\% & 63.7\% & 97.9\% & \textbf{100\%}
& 63.8\% & 62.3\% & 98.8\% & \textbf{100\%} \\

Maintenance/Vessel Ratio
& 0.96 & 1.08 & 2.02 & \textbf{2.42}
& 0.89 & 0.96 & 2.12 & \textbf{2.33} \\
\hline
\end{tabular}
\end{adjustbox}
\end{table*}

\subsubsection{Pricing Zone 2}
The main conclusions remain consistent with those observed in the Pricing Zone $1$, but with additional insights from the more frequent presence of negative prices. 
Specifically, it can be observed that maintenance costs under Pricing Zone $2$ are lower than those observed under Zone $1$ for both HOST and GIMS. This is attributed to the presence of negative electricity prices, during which the model effectively suspends power output until prices rebound. 
As a result, turbines 
require fewer maintenance actions. 
Yet, temporarily suspending generation during negative price periods also reduces total energy output. Consequently, the maintenance cost per unit energy increases in Pricing Zone 2 relative to Pricing Zone $1$. 


\section{Conclusions and Future Directions} \label{S:concl}

Grid-triggered maintenance signals create under-explored opportunities for improving offshore wind farm economics. This paper proposes GIMS, a grid-informed maintenance optimization framework that explicitly incorporates generation curtailment and electricity pricing signals into maintenance scheduling in offshore wind farms. 
By incentivizing the scheduling of maintenance actions during periods of high generation curtailment or economically unattractive periods, GIMS has the potential to reduce overall production and revenue losses significantly. Realistic case studies on offshore wind farms with $5$ and $50$ turbines suggest that GIMS consistently outperforms conventional and sophisticated maintenance models in terms of operational economics, asset availability, and resource utilization, highlighting the potential influence of grid-level information on the operational management of offshore wind farms. 

A key point to highlight is that the performance of GIMS will critically depend on high-quality market signal forecasts. Inaccurate predictions of curtailment or pricing signals may lead to suboptimal scheduling decisions and, in extreme cases, worse performance relative to simpler strategies. In this work, we have assumed, perhaps unrealistically so, that these forecasts are available and are sufficiently accurate, since the primary goal was to demonstrate the potential value of integrating such information into O\&M decision-making. This, nevertheless, points to several promising directions for future research. 

\textit{First}, accurate short- and medium-term forecast models for curtailment and pricing signals are needed. While the literature on the latter is reasonably rich and active (See, e.g., \cite{weron2014electricity, aziz2026iise}), we find that there curtailment forecasting is largely under-studied, with the exception of some very recent efforts \citep{memmel2023curtailment, Galsim2026CurtailmentForecasting}. This is perhaps due to the historically low levels of generation curtailment in most electricity systems world-wide. Indeed, curtailment is a problem of the future \citep{UScurtail, UKcurtail2024}. As curtailment levels continue to rise worldwide as driven by large grid decarbonization initiatives and evolving grid conditions, we anticipate prediction models thereof to mature, making data-driven decision models that make use of such information, as in this paper, practically relevant. 

\textit{Second,} and related to the point above, integrating these market signal forecasts into a stochastic optimization framework (see, e.g., \cite{POSYDON}) would allow maintenance decisions to account for various sources of farm-induced uncertainties (e.g., asset health forecasts, weather predictions) and equally importantly, grid-induced uncertainties from market and system signals. The latter is especially important to unlock meaningful gains from an opportunistic grid-informed maintenance model, since curtailment and pricing signals are formed by the interaction of complex drivers and may potentially carry over substantial uncertainties that need to be rigorously accounted for. 
\textit{Third,} the proposed approach takes the view of a wind power producer who is responsive to the electric grid. This constitutes a step toward coordinated maintenance frameworks in which grid- and farm-level objectives can be jointly co-optimized, perhaps through a bi-level optimization formulation, enabling a rigorous, two-way coupling between farm and system operations.

\if0\blind{
\section*{Data availability statement}
Details to reproduce the results of this work are included in the reproducibility report. The codes and data will be made publicly available upon the paper acceptance. 
\fi

\if0\blind{
\section*{Acknowledgments}
This work is supported in part by the National Science Foundation (ECCS, Award \# 2114422).} \fi

\section*{Acknowledgments}
This work is supported in part by the U.S. National Science Foundation (ECCS, \# 2114422).

\section*{Declaration of Generative AI Use}
While writing this manuscript, the authors used ChatGPT (OpenAI, GPT-5.6 Luna) to enhance the clarity, language, and grammar of certain statements. They subsequently reviewed and revised all material and take full responsibility for the published work.

\section*{Disclosure Statement}
The authors report there are no competing interests to declare.

\appendix

\section{Appendix: Proof of Proposition~\ref{prop:aware_vs_blind}}
\label{app:proof_aware_blind}

\begin{proof}
Define \(h(v)=\mathbb E[\Delta V\mid V=v]\), and let
\(V_{d^{\mathrm I}}=V_{(1)}=\min_{d\in\mathcal D}V_d\).
Because the daily pairs \((C_d,K_d)\) are independent across
candidate days,
\(\mathbb E[\Delta V^{\mathrm I}\mid V_1,\ldots,V_{N_{\mathcal D}}]
=h(V_{(1)})\).
Therefore,
\(\mathbb E[\Delta V^{\mathrm I}]=\mathbb E[h(V_{(1)})]\).
Under the grid-blind strategy, the selected day is distributed as
a generic candidate day, so
\(\mathbb E[\Delta V^{\mathrm B}]
=\mathbb E[\Delta V]=\mathbb E[h(V)]\).
Since \(V_{(1)}\le V_1\) for every realization,
Proposition~\ref{prop:aware_vs_blind} follows if \(h(v)\) is
nondecreasing. It remains to establish this monotonicity.

We can write \(\Delta V=Vq(C)\), where
\(q(c)=\frac{\Delta G/P^{\mathrm{nom}}-c}{1-c}
\mathbb{I}\{c<\Delta G/P^{\mathrm{nom}}\}\)
for \(0\le c<1\), and \(q(1)=0\).
For \(c<\Delta G/P^{\mathrm{nom}}\),
\(q'(c)=(\Delta G/P^{\mathrm{nom}}-1)/(1-c)^2\le 0\),
and \(q(c)=0\) for \(c\ge\Delta G/P^{\mathrm{nom}}\).
Thus, \(q(c)\) is nonincreasing in \(c\).
Conditioning \(\Delta V=Vq(C)\) on \(V=v\) gives
\(h(v)=v\,\mathbb E[q(C)\mid V=v]\).

We now show that \(\mathbb E[q(C)\mid V=v]\) is nondecreasing
in \(v\). By Bayes' rule,
\(f_{C\mid V}(c\mid v)
=f_{V\mid C}(v\mid c)f_C(c)/f_V(v)\).
Consider any \(v_2>v_1\). Under the condition stated in
Proposition~\ref{prop:aware_vs_blind}, for any \(c_2>c_1\),
\(\frac{f_{V\mid C}(v_2\mid c_2)}
        {f_{V\mid C}(v_2\mid c_1)}
\le
\frac{f_{V\mid C}(v_1\mid c_2)}
     {f_{V\mid C}(v_1\mid c_1)}\).
Equivalently,
\(\frac{f_{V\mid C}(v_2\mid c_2)}
        {f_{V\mid C}(v_1\mid c_2)}
\le
\frac{f_{V\mid C}(v_2\mid c_1)}
     {f_{V\mid C}(v_1\mid c_1)}\).
By Bayes' rule, it follows that
\(f_{C\mid V}(c\mid v_2)/f_{C\mid V}(c\mid v_1)\)
is nonincreasing in \(c\). Since both conditional densities
integrate to one,
\(\Pr(C\le c\mid V=v_2)\ge\Pr(C\le c\mid V=v_1)\)
for all \(c\in[0,1]\), so curtailment is conditionally shifted
toward lower values at higher revenue. Because \(q(c)\) is
nonincreasing, this implies
\(\mathbb E[q(C)\mid V=v_2]
\ge \mathbb E[q(C)\mid V=v_1]\).

Since \(q\ge 0\) and revenue is nonnegative,
\(h(v)=v\,\mathbb E[q(C)\mid V=v]\) is therefore nondecreasing.
Finally, \(V_{(1)}\le V_1\) implies \(h(V_{(1)})\le h(V_1)\).
Taking expectations and using the expected-loss identities above gives
\(\mathbb E[\Delta V^{\mathrm I}]
=\mathbb E[h(V_{(1)})]
\le \mathbb E[h(V_1)]
=\mathbb E[\Delta V^{\mathrm B}]\).
\end{proof}

\bibliographystyle{chicago-disambig}
\spacingset{1}
\bibliography{IISE-Trans}
	
\end{document}